 \DeclareMathAlphabet{\mathpzc}{OT1}{pzc}{m}{it}
 \newtheorem{theorem}{Theorem}[section]
 \newtheorem{lemma}[theorem]{Lemma}
 \newtheorem{definition}[theorem]{Definition}
  \theoremstyle{definition}
 \newtheorem{example}[theorem]{Example}
\newtheorem{examples}[theorem]{Examples}
 \newtheorem{remark}[theorem]{Remark}
\newtheorem*{acknowledgements}{Acknowledgements}
\renewenvironment{proof}{\noindent{\it
Proof.}}{\bgroup\hspace{\stretch{1}}$\square$\egroup\medskip\par}
\newcommand{\smooth}{\mathcal{C}^{\infty}}
\newcommand{\sM}{\mathcal{M}}
\newcommand{\sN}{\mathcal{N}}
\newcommand{\Mor}{\textrm{Mor}}
\newcommand{\Map}{\textrm{Map}}
\newcommand{\Diff}{\textrm{Diff}}
\begin{document}
\vspace{15cm}
 \title{Introduction to supergeometry}
\author{Alberto S. Cattaneo\footnote{Institut f\"ur Mathematik, Universit\"at Z\"urich, alberto.cattaneo@math.uzh.ch. Partially supported by SNF Grants 20-113439 and 20-131813.} \hspace{0cm} and
Florian Sch\"atz\footnote{Center for Mathematical Analysis, Geometry and Dynamical Systems, IST Lisbon, fschaetz@math.ist.utl.pt. Partially supported by a FCT grant and project PTDC/MAT/098936/2008.}}

 \maketitle

\begin{abstract} 
These notes are based on a series of lectures given by the first author at the school of `Poisson 2010', held at IMPA, Rio de Janeiro. They contain an exposition
of the theory of super- and graded manifolds, cohomological vector fields, graded symplectic structures, reduction and the AKSZ-formalism.
\end{abstract}

\section{Introduction}\label{s:intro}

The main idea of supergeometry is to extend classical geometry by allowing for {\em odd} coordinates. These
are coordinates which anticommute, in contrast to usual coordinates which commute. The global object, which one obtains from gluing
such extended coordinate systems, are {\em supermanifolds}.
A prominent example is obtained by considering the one-forms
$(dx^i)_{i=1}^{n}$ as odd coordinates, accompanying the usual `even' coordinates $(x^i)_{i=1}^n$ on $\mathbb{R}^n$.
The corresponding supermanifold is known as $\Pi T\mathbb{R}^n$.

The use of odd coordinates has its roots in physics, but it turned out to have interesting mathematical applications as well.
Let us briefly mention those which are explained in more detail below:

\begin{itemize}
 \item[i)] Some classical geometric structures
can be encoded in simple supergeometric structures. For instance, Poisson manifolds and Courant algebroids
can be described in a uniform way in terms of supermanifolds equipped with a supersymplectic structure and a symplectic cohomological 
vector field, see Subsection \ref{subsection:examples}.
One can also treat generalized complex structures in this setting.

 \item[ii)] As an application, a unifying approach to the {\em reduction} of Poisson manifolds, Courant algebras and generalized complex
structures can be developed. An outline of this approach is presented in Subsection \ref{subsection:reduction}.

 \item[iii)] The AKSZ-formalism (\cite{AKSZ}) allows one to associate topological fields theories to supermanifolds equipped
with additional structures, see Section \ref{section:AKSZ}. Combinging this with the supergeometric description of classical geometric structures mentioned in $i)$,
one obtains topological field theories associated to Poisson manifolds and Courant algebroids. These field theories
include the Poisson-Sigma model as well as Chern-Simons theory for trivial principal bundles.
\end{itemize}

\subsection*{Plan of the notes}

Section \ref{section:supermanifolds} is a short review of the basics of supergeometry.

In Section \ref{section:graded_manifolds}, graded manifolds, as well as graded and cohomological vector fields,
are introduced. The concept
of cohomological vector fields allows one to think of `symmetries', which appear in a wide variety of examples,
in a unified and geometric way.
For instance, $L_{\infty}$-algebras and Lie algebroid structures can be seen as special instances
of cohomological vector fields.

In Section \ref{section:graded_symplectic_geometry}, graded symplectic manifolds
are explained. Due to the additional grading, graded symplectic geometry often behaves
much more rigidly than its ungraded counterpart.
A dg symplectic manifold is a graded symplectic manifold with a compatible cohomological vector field.
Poisson manifolds, Courant algebroids and generalized complex structures
fit naturally into the framework of dg symplectic manifolds. This is explained in Subsection \ref{subsection:examples},
while Subsection \ref{subsection:reduction} outlines a unified approach to the reduction of these structures via
graded symplectic geometry.

Finally, Section \ref{section:AKSZ} provides an introduction to the AKSZ-formalism (\cite{AKSZ}).
This is a procedure which allows one to associate topological field theories to dg symplectic manifolds.
As particular examples, one recovers the Poisson Sigma model and Chern-Simons theory (for trivial principal bundles).
In this Section, we basically follow the expositions of the AKSZ-formalism from \cite{R2} and \cite{CF}, respectively.

\begin{acknowledgements}
We thank Dmitry Royenberg, Pavol \v{S}evera and Marco Zambon for helpful comments.
Moreover, we thank the school of `Poisson 2010' for partial financial support.
\end{acknowledgements}

\section{Supermanifolds}\label{section:supermanifolds}

\subsection{Definition}
A {\em supermanifold} $\sM$ is a locally ringed space $(M,\mathcal{O}_M)$ which is locally isomorphic to
\begin{align*}
(U, \smooth(U)\otimes \wedge W^*),
\end{align*}
where $U$ is an open subset of  $\mathbb{R}^n$ and $W$ is some finite-dimensional real vector space.
The isomorphism mentioned above is in the category of $\mathbb{Z}_2$-graded algebras, i.e. the {\em parity}
\begin{align*}
\bigoplus_{k\ge 0}\smooth(U)\otimes \wedge^k W^* \to \mathbb{Z}_2, \quad f\otimes x \mapsto |f\otimes x|:= |x| = k \textrm{ mod } 2
\end{align*}
has to be preserved.

Loosly speaking, every supermanifold is glued from pieces that look like open subsets of $\mathbb{R}^n$, together
with some odd coordinates, which correspond to a basis of $W^*$. 
The supermanifold corresponding to such a local
piece is denoted by $U \times \Pi W$, and we write 
\begin{align*}
\smooth(U\times \Pi W):= \smooth(U) \otimes \wedge W^*.
\end{align*}
Similarly, the algebra of {\em polynomial functions} on $V\times \Pi W$, for $V$ and $W$ real, finite dimensional vector spaces,
is $S(V^*)\otimes \wedge W^*$. Here $S(V^*)$
denotes the symmetric algebra of the vector space $V^*$.

In the global situation, the algebra of smooth function $\smooth(\sM)$ on a supermanifold $\sM$
is defined to be the algebra of global sections of the sheaf associated to $\sM$. The parity extends to
$\smooth(\sM)$ and $\smooth(\sM)$ is a graded commutative algebra with respect to this parity, i.e.
for $f$ and $g$ homogeneous elements of degree $|f|$ and $|g|$ respectively, one has
\begin{align*}
 f\cdot g = (-1)^{|f||g|}g\cdot f.
\end{align*}

\begin{examples}
\hspace{0cm}
\begin{enumerate}
 \item The algebra of differential forms $\Omega(M)$ on a manifold $M$ is locally isomorphic to
\begin{align*}
 \smooth(U) \otimes \wedge T_x^*M
\end{align*}
where $x$ is some point on $U$. Hence the sheaf of differential forms on a manifold corresponds to a supermanifold.

\item Let $\mathfrak{g}$ be a real, finite dimensional Lie algebra. The cochains of the Chevalley-Eilenberg complex of $\mathfrak{g}$
are the elements of $\wedge \mathfrak{g}^*$, which is the same as the algebra of smooth functions on the supermanifold 
$\Pi \mathfrak{g}$.
\end{enumerate}
\end{examples}

\subsection{Morphisms of supermanifolds}
Since supermanifolds are defined as certain locally ringed spaces, it is natural to define morphisms of supermanifolds as morphims
of these locally ringed spaces. In the smooth setting, one can equivalently define morphisms from $\sM$ to $\sN$
to be morphisms of superalgebras from $\smooth(\sN)$ to $\smooth(\sM)$, see \cite{Var} or \cite{CFio}.

Let us spell this out in more detail for the local case, i.e. 
consider a patch of $\sN$ isomorphic to $\tilde{V}\times \Pi \tilde{W}$.
In this situation one has

\begin{align*}
\textrm{Mor}(\sM,\sN) \cong  \textrm{Mor}_{\textrm{alg}}(\smooth(\sN),\smooth(\sM)) \cong 
\left((\tilde{V}\oplus \Pi\tilde{W})\otimes \smooth(\sM) \right)_{\textrm{even}}.
\end{align*}
Here, the last object is a super vector space, i.e. a vector space with a decomposition
into an even and an odd part. $\smooth(\sM)$ is graded by the parity, $\tilde{V}\oplus \Pi\tilde{W}$ is considered with its obvious
decomposition and the parity of a tensor product $a\otimes b$ is the product of the parities of $a$ and $b$, respectively. 

The last isomorphism uses the fact that it suffices to know the restriction of a morphism
\begin{align*}
\smooth(\tilde{V}\times \Pi \tilde{W}) \to \smooth(\sM)
\end{align*}
to the subspace of linear functions,
in order to be able to recover the morphism itself.

\section{Graded manifolds}\label{section:graded_manifolds}

\subsection{Definition}
Let us first introduce the relevant linear theory:
A {\em graded vector space} $V$ is a collection of vector spaces $(V_i)_{i\in \mathbb{Z}}$.
The algebra of {\em polynomial functions} on $V$ is the {\em graded symmetric algebra} $S(V^*)$ over $V^*$.
In more detail:
\begin{itemize}
\item The {\em dual} $V^*$ of a graded vector space $V$ is the graded vector space $(V_{-i}^*)_{i\in \mathbb{Z}}$.
\item The {\em graded symmetric algebra} $S(W)$ over a graded vector space $W$
is the quotient
of the tensor algebra of $W$ by the ideal generated by the elements of the form
\begin{align*}
 v\otimes w -(-1)^{|v||w|}w\otimes v
\end{align*}
for any homoegenous elements $v$ and $w$ of $W$.
\end{itemize}

A {\em morphism} $f: V \to W$ of graded vector spaces is a collection of linear maps
\begin{align*}
(f_i: V_i \to W_i)_{i\in \mathbb{Z}}.
\end{align*} 
The morphisms between graded vector spaces are also referred to as graded linear maps.

Moreover, $V$ {\em shifted} by $k$ is the graded vector space $V[k]$ given by $(V_{i+k})_{i\in \mathbb{Z}}$.
By definition, a graded linear map of degree $k$ between $V$ and $W$ is
a graded linear map between $V$ and $W[k]$.

The definition of a {\em graded manifold} is 
analogous to that of  supermanifold, but now in the graded setting:
\begin{itemize} 
\item The local model is 
\begin{align*}
(U,\smooth(U)\otimes  S(W^*)),
\end{align*}
where $U$ is an open subset of $\mathbb{R}^{n}$ and $W$ is a graded vector space.
\item The isomorphism between the structure sheaf and the local model is in the category of $\mathbb{Z}$-graded algebras.
\end{itemize}
The algebra of smooth functions of a graded manifold $(M,\mathcal{O}_M)$ (i.e. algebra of global sections) automatically inherits a $\mathbb{Z}$-grading. 
Morphisms between graded manifolds are morphisms of locally ringed spaces. In the smooth setting, one can equivalently
consider morphisms of the $\mathbb{Z}$-graded algebra of smooth functions.

Essentially all examples of graded manifolds come from {\em graded vector bundles}, which are generalizations of graded vector spaces.
A {\em graded vector bundle} $E$ over a manifold $M$ is a collection
of ordinary vector bundles $(E_i)_{i\in \mathbb{Z}}$ over $M$.
The sheaf 
\begin{align*}
U \mapsto \Gamma(U,S(E|_{U}^*))
\end{align*}
corresponds to a graded manifold which we will also denote by $E$ from now on. It can be shown that any graded manifold is isomorphic to a graded manifold
associated to a graded vector bundle.

\begin{examples}
\hspace{0cm}
\begin{enumerate}
 \item For $V$ an ordinary vector space, one has $\smooth(V[1])=\wedge V^*$. In particular,
the space of cochains of the Chevalley-Eilenberg complex of a Lie algebra $\mathfrak{g}$ is equal to
$\smooth(\mathfrak{g}[1])$.
 \item The algebra of differential forms $\Omega(M)$ is the algebra of smooth functions on $T[1]M$.
\end{enumerate}
\end{examples}

\subsection{Graded vector fields}

Let $V$ be a graded vector space with homogeneous coordinates $(x^i)_{i=1}^{n}$ corresponding to a basis of $V^*$. A {\em vector field} on $V$
is a linear combination of the form
\begin{align*}
X=\sum_{i=1}^{n}X^{i}\frac{\partial}{\partial x^{i}}
\end{align*}
where $(X^{i})_{i=1}^{n}$ is a tuple of functions on $V$, i.e. of elements of $S(V^{*})$,
and $(\frac{\partial}{\partial x^{i}})_{i=1}^{n}$ is the basis of $V$ dual to $(x^{i})_{i=1}^{n}$.
The vector field $X$ acts on the algebra of functions according to the following rules:
\begin{itemize}
\item $\frac{\partial}{\partial x^{i}}(x^{j}) = \delta^{j}_{i}$ and
\item $\frac{\partial}{\partial x^{i}}(fg)= \left(\frac{\partial}{\partial x^{i}}(f) \right) g + (-1)^{|x^{i}||f|}f \left(\frac{\partial}{\partial x^{i}}(g) \right).$
\end{itemize}
A vector field is {\em graded} if it maps functions of degree $m$ to functions of degree $m+k$ for some fixed $k$. In this case, the integer $k$ is called the degree of $X$.

Globally, graded vector fields on a graded manifold can be identified with graded derivations of the algebra of smooth functions. Accordingly, a graded vector field on $\sM$ is a graded linear map
\begin{align*}
X: \smooth(\sM) \to \smooth(\sM)[k]
\end{align*}
which
satisfies the graded Leibniz rule, i.e.
\begin{align*}
X(fg)=X(f)g + (-1)^{k|f|} fX(g)
\end{align*}
holds for all homogenoeus smooth functions $f$ and $g$.
The integer $k$ is called the {\em degree} of $X$.

\begin{example}
Every graded manifold comes equipped with the {\em graded Euler vector field} which can be
defined in two ways:
\begin{itemize}
\item In local coordinates $(x^{i})_{i=1}^{n}$, it is given by 
\begin{align*}
E=\sum_{i=1}^{n}|x^{i}|x^i\frac{\partial}{\partial x^{i}}.
\end{align*}
\item Equivalently, it is the derivation which acts on homogeneous smooth functions via
\begin{align*}
E(f) = |f| f.
\end{align*}
\end{itemize}
\end{example}

\subsection{Cohomological vector fields}

\begin{definition}
A {\em cohomological vector field} is a graded vector field of degree $+1$
which commutes with itself.
\end{definition}

\begin{remark}
Observe that the graded commutator equips the graded vector space of graded vector fields
with the structure of a graded Lie algebra: if $X$ and $Y$ are graded derivations of degree $k$ and $l$ respectively, then
\begin{align*}
[X,Y]:= X \circ Y - (-1)^{kl} Y \circ X
\end{align*}
is a graded derivation of degree $k+l$.

It is left as an exercise to the reader to verify that in local coordinates $(x_i)_{i=1}^{n}$, the graded
commutator $[X,Y]$ of
\begin{align*}
X=\sum_{i=1}^{n}X^{i}\frac{\partial}{\partial x^{i}} \quad \textrm{and} \quad Y=\sum_{j=1}^{n}Y^{j}\frac{\partial}{\partial x^{j}}
\end{align*}
is equal to
\begin{align*}
\sum_{i,j=1}^{n}X^{i}\frac{\partial Y^{j}}{\partial x_{i}} \frac{\partial}{\partial x^{j}} - (-1)^{\sharp}\sum_{i,j=1}^{n}Y^{j}\frac{\partial X^{i}}{\partial x_{j}} \frac{\partial}{\partial x^{i}},
\end{align*}
for some appropriate sign $\sharp$.

Futhermore, one can check that the degree of a graded  vector field $X$ is its eigenvalue with respect to the Lie derivative along $E$, i.e.
\begin{align*}
[E,X]=\deg(X)X.
\end{align*}

Let $Q$ be a graded vector field of degree $+1$, i.e. $Q$ is a linear map
\begin{align*}
Q: \smooth(\sM) \to \smooth(\sM)[1]
\end{align*}
which satisfies the graded Leibniz rule.
Because of
\begin{align*}
[Q,Q] = 2 \left(Q \circ Q\right),
\end{align*}
every cohomological vector field on $\sM$ corresponds to a differential on the graded algebra of smooth functions $\smooth(\sM)$.
\end{remark}

\begin{examples}
\hspace{0cm}
\begin{enumerate}
\item Consider the shifted tangent bundle $T[1]M$, whose algebra of smooth functions is equal to the algebra
of differential forms $\Omega(M)$.
The de Rham differential on $\Omega(M)$ corresponds to a cohomological vector field $Q$ on $T[1]M$.

We fix local coordinates $(x^i)_{i=1}^{n}$ on $M$ and denote the induced fiber coordinates on $T[1]M$ by $(dx^{i})_{i=1}^{n}$.
In the coordinate system $(x^i,dx^i)_{i=1}^{n}$, the cohomological vector field $Q$ is given by
\begin{align*}
Q=\sum_{i=1}^{n}dx^{i}\frac{\partial}{\partial x^{i}}.
\end{align*}
\item Let $\mathfrak{g}$ be a real, finite dimensional Lie algebra.
The graded manifold $\mathfrak{g}[1]$ carries a cohomological vector field $Q$ which corresponds to the
Chevalley-Eilenberg differential on $\wedge \mathfrak{g}^{*} = \smooth(\mathfrak{g}[1])$.

In more detail, let $(e_i)_{i=1}^{n}$ be a basis of $\mathfrak{g}$, and $(f_{ij}^{k})$ be the corresponding structure constants given by
\begin{align} \label{structure_constants}
[e_i,e_j]=\sum_{k=1}^{n}f_{ij}^{k}e_k.
\end{align}
Then, the cohomological vector field $Q$ reads
\begin{align*}
\frac{1}{2}\sum_{i,j,k=1}^{n}x^{i}x^{j}f_{ij}^{k}\frac{\partial}{\partial x^{k}},
\end{align*}
where $(x^{i})_{i=1}^{n}$ are the coordinates on $\mathfrak{g}[1]$ which correspond to the basis dual to $(e_i)_{i=1}^{n}$.

Actually, one can check that $[Q,Q]=0$ is equivalent to the statement that the bracket $[-,-]: \mathfrak{g}\otimes \mathfrak{g} \to \mathfrak{g}$ defined via formula (\ref{structure_constants})
satisfies the Jacobi-identity.

\item One can generalize the last example in two directions:
\begin{itemize}
\item allowing for {\em higher degrees}: let $V$ be a graded vector space with only finitely 
many non-zero homogeneous components, all of which are finite dimensional. Formal\footnote{{\em Formal} cohomological vector fields
are elements of the completion of the space of vector fields with respect to the degree, i.e. one considers $\hat{S}(V^*)\otimes V$
instead of $S(V^*)\otimes V$. The subset of (ordinary) cohomological vector fields corresponds to $L_{\infty}$-algebra structures on $V$ whose structure maps
$V^{\otimes n} \to V[1]$ vanish for all but finitely many $n$.} cohomological vector fields on $V$ are in one-to-one
correspondence with $L_{\infty}$-algebra structures on $V$.
\item allowing for a {\em non-trivial base}: let $A$ be a vector bundle over a manifold $M$.
Cohomological vector fields on $A[1]$ are in one-to-one correspondence with
Lie algebroids structures on $A$. This observation is due to Vaintrob \cite{V}.
\end{itemize}
\end{enumerate}
\end{examples}

\begin{definition}
A graded manifold endowed with a cohomological vector field is called a {\em differential graded manifold}, or dg manifold for short.

A morphism of dg manifolds is a morphism of graded manifolds, with respect to which the cohomological vector field are related.
\end{definition}

\begin{remark}
Morphisms of dg manifolds can be defined equivalently by requiring that the corresponding
morphisms between the algebra of smooth functions is a chain map with respect to the differentials
given by the cohomological vector fields.
\end{remark}

\begin{remark}
So far we elaborated on the condition $[Q,Q]=0$ mostly from an algebraic perspective.
However, it has also geometric significance, as we will see now.
Assume $X$ is a graded vector field of degree $k$ on a graded manifold $\sM$. We want to
construct the flow of $X$, i.e. solve the ordinary differential equation
\begin{align} \label{ode}
\frac{dx^{i}(t)}{dt}=X^{i}(x(t))
\end{align}
in a coordinate chart of $\sM$, where $X$ is given by
\begin{align*}
\sum_{i=1}^{n}X^{i}\frac{\partial}{\partial x^{i}}.
\end{align*}
Observe that the degree of the components $X^{i}$ is $|x^{i}|+k$.
Consequently we have to assign degree $-k$ to the time parameter $t$ in order for the two sides
of the flow equation to have the same degree. Although we will not introduce the concept of maps between
graded manifolds until Section \ref{section:AKSZ}, let us mention that one can think of the solution of equation (\ref{ode}) as a map
\begin{align*}
\mathbb{R}[k] \to \sM.
\end{align*}
Now, assume that $X$ is of degree $+1$. This implies that $t$ is of degree $-1$ and hence
squares to zero. The expansion of the flow with respect to $t$ looks like
\begin{align*}
x^{i}(t) = x^{i} + t v^{i}
\end{align*}
where $v^{i}$ is of degree $|x^{i}|+1$.
On the one hand, this implies
\begin{align*}
\frac{d x^{i}(t)}{dt} = v^{i}
\end{align*}
while on the other hand,
\begin{align*}
X^{i}(x(t))=X^{i}(x) + t \sum_{j=1}^{n} \frac{\partial X^{i}}{\partial x^j}v^{j}
\end{align*}
holds.
Using the flow equation, we obtain
\begin{align*}
v^{i}=X^{i}(x) \qquad \textrm{and} \qquad \sum_{j=1}^{n}\frac{\partial X^{i}}{\partial x^j}v^{j}=0,
\end{align*}
which combines into
\begin{align*}
\sum_{j=1}^{n}\frac{\partial X^{i}}{\partial x^j}X^{j} = 0 \quad \Leftrightarrow \quad [X,X]=0.
\end{align*}
So $[X,X]=0$ turns out to be a necessary and sufficient condition for the integrability of $X$.
This conclusion can be seen as a special instance of Frobenius Theorem for smooth graded manifolds, see \cite{BCMZ2}.
\end{remark}

\section{Graded symplectic geometry}\label{section:graded_symplectic_geometry}

\subsection{Differential forms}

Locally, the algebra of differential forms on a graded manifold $\sM$ is constructed by adding new coordinates $(dx^{i})_{i=1}^{n}$
to a system of homogeneous local coordinates $(x^{i})_{i=1}^{n}$ of $\sM$. Moreover, one assigns
the degree $|x^{i}|+1$ to $dx^{i}$.

\begin{remark}
\hspace{0cm}
\begin{enumerate}
\item If $x^{i}$ is a coordinate of odd degree, $dx^i$ is of even degree and consequently
\begin{align*}
(dx^{i})^{2}\neq 0.
\end{align*}
\item On an ordinary smooth manifold, differential forms have two important properties:
they can be differentiated -- hence the name differential forms -- and they also provide
the right objects for an integration theory on submanifolds. It turns out that on graded manifolds, this is no longer true,
since the differential forms we introduced do not come along with a nice integration theory.
To solve this problem, one has to introduce new objects, called `integral forms'.
The interested reader is referred to \cite{M}.
\end{enumerate}
\end{remark}

A global description of differential forms on $\sM$ is as follows:
the shifted tangent bundle $T[1]\sM$ carries a natural structures of a dg manifold with
a cohomological vector field $Q$ that reads
\begin{align*}
\sum_{i=1}^{n}dx^{i}\frac{\partial}{\partial x^{i}}
\end{align*}
in local coordinates.
The {\em de Rham complex} $(\Omega(\sM),d)$ of $\sM$ is $\smooth(T[1]\sM)$, equipped
with the differential corresponding to the cohomological vector field $Q$.

\begin{remark}
The classical Cartan calculus extens to the graded setting:
\begin{enumerate}
\item Let $X$ be a graded vector field on $\sM$ and $\omega$ a differential form.
Assume their local expressions in a coordinate system $(x^{i})_{i=1}^{n}$ are
\begin{align*}
X=\sum_{i=1}^{n}X^i\frac{\partial}{\partial x^{i}} \qquad \textrm{and} \qquad \omega=\sum_{i_1,\cdots,i_k=1}^{n}\omega_{i_1\cdots i_k}dx^{i_1}\cdots dx^{i_k}, \qquad \textrm{respectively.}
\end{align*}
The {\em contraction} $\iota_{X}\omega$ of $X$ and $\omega$ is given locally by
\begin{align*}
\sum_{i_{1},\cdots,i_k=1}^{n}\sum_{l=1}^{k}\pm \omega_{i_{1}\cdots i_k}X^{i_l}dx^{i_1}\cdots \widehat{dx^{i_l}}\cdots dx^{i_k}.
\end{align*}
Alternatively -- and to get the signs right -- one uses the following rules:
\begin{itemize}
\item $\iota_{\frac{\partial}{\partial x^{i}}}dx^{j}=\delta_i^j$,
\item $\iota_{\frac{\partial}{\partial x^{i}}}$ is a graded derivation of the algebra of differential forms of degree $|x^{i}|-1$.
\end{itemize}
\item The {\em Lie derivative} is defined via Cartan's magic formula
\begin{align*}
L_{X}\omega:= \iota_{X}d\omega + (-1)^{|X|} d\iota_{X} \omega.
\end{align*}
If one considers both $\iota_X$ and $d$ as graded derivations, $L_{X}$ is their graded commutator $[d,\iota_X]$.
It follows immediately that
\begin{align*}
[L_X,d]=0
\end{align*}
holds.
\end{enumerate}

Let us compute the Lie derivative with respect to the graded Euler vector field $E$ in local coordinates
$(x^{i})$. By definition
\begin{align*}
L_Ex^{i} = |x^{i}|x^{i}
\end{align*}
and consequently
\begin{align*}
L_{E}dx^{i}=dL_{E}x^{i}=|x^{i}|dx^{i}.
\end{align*}
This implies that $L_E$ acts on homogenous differential forms on $\sM$ by multiplication by
the difference between the {\em total degree} -- i.e. the degree in $\smooth(T[1]\sM)$ -- and the {\em form degree} --
which is given by counting the `$d$'s, loosly speaking. We call this difference the {\em degree} of a differential form $\omega$ and denote it by $\deg{\omega}$. 
\end{remark}

\subsection{Basic graded symplectic geometry}

\begin{definition}
A graded symplectic form of degree $k$ on a graded manifold $\sM$ is a two-form $\omega$ which has the following properties:
\begin{itemize}
\item $\omega$ is homogeneous of degree $k$,
\item $\omega$ is closed with respect to the de Rham differential,
\item $\omega$ is non-degenerate, i.e. the induced morphism of graded vector bundles
\begin{align*}
\omega: T\sM \to T^{*}[k]\sM
\end{align*}
is an isomorphism.
\end{itemize}

A graded symplectic manifold of degree $k$ is a pair $(\sM,\omega)$ of a graded manifold $\sM$ and a graded symplectic form $\omega$ of degree $k$ on $\sM$.
\end{definition}

\begin{examples}
\hspace{0cm}
\begin{enumerate}
\item Ordinary symplectic structures on smooth manifolds can be seen as graded symplectic structures.
\item Let $V$ be a real vector space. The contraction between $V$ and $V^{*}$ defines a symmetric non-degenerate pairing on $V\oplus V^{*}$.
This pairing is equivalent to a symplectic form of degree $k+l$ on $V[k]\oplus V^{*}[l]$.
\item Consider $\mathbb{R}[1]$ with the two-form $\omega=dx dx$. This is a symplectic form of
degree $2$.
\end{enumerate}
\end{examples}

\begin{lemma}\label{lemma:Dmitry1}
Let $\omega$ be a graded symplectic form of degree $k\neq 0$.
Then $\omega$ is exact.
\end{lemma}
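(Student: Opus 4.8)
The plan is to produce an explicit primitive for $\omega$ by contracting it with the graded Euler vector field $E$, in direct analogy with the standard proof of the Poincar\'e lemma via the radial vector field. The key input is that, since $\omega$ is homogeneous of degree $k$, the preceding remark on the Lie derivative along $E$ gives
\[
L_E\,\omega = k\,\omega.
\]
(The normalization is the correct one: in Example 3 the form $\omega = dx\,dx$ on $\mathbb{R}[1]$ has total degree $4$ and form degree $2$, hence $\deg\omega = 2$, matching its stated degree.)

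Next I would apply Cartan's magic formula to $E$. Because $E$ acts on homogeneous functions by $E(f) = |f|\,f$, it preserves degrees, so it is a graded vector field of degree $0$ and $|E| = 0$; consequently the sign $(-1)^{|E|}$ equals $+1$ and
\[
L_E\,\omega = \iota_E\,d\omega + d\,\iota_E\,\omega.
\]
Since $\omega$ is closed, the first term vanishes and we obtain $L_E\,\omega = d(\iota_E\,\omega)$.

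Comparing the two displays yields $k\,\omega = d(\iota_E\,\omega)$, and since $k \neq 0$ by hypothesis I may divide by $k$ to conclude
\[
\omega = d\Bigl(\tfrac{1}{k}\,\iota_E\,\omega\Bigr),
\]
so that $\tfrac{1}{k}\,\iota_E\,\omega$ is a global primitive and $\omega$ is exact.

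The whole argument is formal once the eigenvalue relation $L_E\,\omega = k\,\omega$ and Cartan's formula are in place, so I do not anticipate a serious obstacle. The single essential use of the hypothesis is the final division by $k$, which is exactly where $k \neq 0$ is needed and which explains why the conclusion fails in degree $0$ (where honest, possibly non-exact, symplectic forms occur). The only bookkeeping point worth verifying is that $E$ has degree $0$, so that no unwanted sign enters the magic formula.
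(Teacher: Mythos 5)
Your proof is correct and is essentially identical to the paper's: the paper also computes $k\omega = L_E\omega = d\iota_E\omega$ and divides by $k$ to get the primitive $\frac{1}{k}\iota_E\omega$. You merely spell out the intermediate steps (Cartan's formula, the vanishing of $\iota_E d\omega$, and the degree-zero sign check for $E$) that the paper leaves implicit.
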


\begin{proof}
One computes
\begin{align*}
k\omega = L_E \omega = d \iota_{E}\omega.
\end{align*}
This implies $\omega = \frac{d\iota_{E}\omega}{k}$.
\end{proof}

\begin{definition}
Let $\omega$ be a graded symplectic form on a graded manifold $\sM$.

A vector field $X$ is called ...
\begin{itemize}
\item {\em symplectic} if the Lie derivative of $\omega$ with respect to $X$ vanishes, i.e. $L_X\omega=0$,
\item {\em Hamiltonian} if the contraction of $X$ and $\omega$ is an exact one-form, i.e.
there is a smooth function $H$ such that $\iota_X\omega = dH$.
\end{itemize}
\end{definition}

\begin{lemma}\label{lemma:Dmitry2}
Suppose $\omega$ is a graded symplectic form of degree $k$ and $X$
is a symplectic vector field of degree $l$.
If $k+l\neq 0$, then $X$ is Hamiltonian.
\end{lemma}

\begin{proof}
By definition, we have
\begin{align*}
[E,X]= l X \qquad \textrm{and} \qquad L_X\omega = d\iota_X\omega = 0.
\end{align*}
Set $H:=\iota_{E}\iota_{X}\omega$ and compute
\begin{align*}
dH = d\iota_{E}\iota_{X}\omega = L_{E}\iota_X \omega - \iota_{E}d\iota_{X}\omega = \iota_{[E,X]} \omega = (k+l)\iota_{X}\omega.
\end{align*}
Hence $\iota_{X}\omega = \frac{dH}{k+l}$.
\end{proof}

\begin{remark}
Lemma \ref{lemma:Dmitry1} and Lemma \ref{lemma:Dmitry2} can be found in \cite{R1}.
\end{remark}

\begin{example}
Let $(\sM,\omega)$ be a graded symplectic manifold and $Q$ a symplectic cohomological vector field.
By definition, the degree of $Q$ is $1$. Lemma \ref{lemma:Dmitry2} implies that $Q$ is Hamiltonian
if $k= \deg\omega \neq -1$. We remark that the exceptional case $k = -1$ is also relevant since it appears in the BV-formalism (\cite{BV2,Sw,Se}).

Assume that $Q$ is Hamiltonian. Similar to the ungraded case, the graded symplectic form induces a
bracket $\{-,-\}$ via
\begin{align*}
\{f,g\}:=(-1)^{|f|+1}X_{f}(g)
\end{align*}
where $X_f$ is the unique graded vector field that satisfies $\iota_{X_f}\omega = df$.
It can be checked that $\{-,-\}$ satisfies relations similar to the ordinary Poisson bracket.
Using the bracket, one can express $Q$ with the help of a Hamiltonian function $S$ by
\begin{align*}
Q=\{S,-\}.
\end{align*} 
Since
\begin{align*}
[Q,Q](f) = \{\{S,S\},f\}
\end{align*}
The relation $[Q,Q]=0$ is equivalent to $\{S,S\}$ being a constant.

Observe that $S$ can be chosen of degree $k+1$, while the bracket $\{-,-\}$ decreases the degree by
$k$. Consequently, the degree of $\{S,S\}$ is $k+2$. Since constants are of degree $0$,
$k \neq -2$ implies
\begin{align*}
\{S,S\}=0.
\end{align*}
This last equation is known as the {\em classical master equation}.
\end{example}

\begin{definition}
A graded manifold endowed with a graded symplectic form and a symplectic cohomological
vector field is called a {\em differential graded symplectic manifold}, or dg symplectic manifold for short.
\end{definition}

\subsection{Examples of dg symplectic manifolds}\label{subsection:examples}

Next, we study some special cases of dg symplectic manifolds $(\sM,\omega,Q)$, where we
assume the cohomological vector field $Q$ to be Hamiltonian with Hamiltonian function $S$. As before, $k$ denotes the degree of
the graded symplectic form $\omega$.
As mentioned before, the case $k=-1$ occures in the BV-formalism (\cite{BV2} and \cite{Sw,Se}).

\begin{enumerate}
\item Consider the case $k=0$. This implies that $S$ is of degree $1$. Since the degree of the graded
symplectic form is zero, it induces an isomorphism between the coordinates of positive
and negative degree. Assuming that $S$ is non-trivial, there must be coordinates of positive degree -- and hence of negative degree as well.
We remark that the situation just described appears in the BFV-formalism (\cite{BF,BV1}).
\item Suppose $k>0$ and that all the coordinates are of non-negative degree. Dg symplectic manifolds
with that property were called dg symplectic $N$-manifolds by P. \v{S}evera, (letter nr. $8$ of \cite{SeLetters} and \cite{SeHomotopy}).
Let us look at the case $k=1$ and $k=2$ in more detail:
\begin{enumerate}
\item $k=1$. The graded symplectic structure induces an isomorphism between the coordinates of degree $0$, which we denote by $(x^{i})_{i=1}^{n}$, and the coordinates in degree $1$, which we denote by $(p_{i})_{i=1}^{n}$. All other degrees are excluded, since the would imply that coordinates of negative degree are around.

The Hamiltonian $S$ has degree $2$, so locally it must be of the form
\begin{align*}
S=\frac{1}{2}\sum_{i,j=1}^{n}\pi^{ij}(x)p_ip_j.
\end{align*}
Hence, locally $S$ corresponds to a bivector field and the classical master equation $\{S,S\}=0$
implies that $S$ actually corresponds to a Poisson bivector field.
This also holds globally, as the following Theorem due to A. Schwarz (\cite{Sw}) asserts:

\begin{theorem}\label{theorem:Schwarz}
Let $(\sM,\omega)$ be a graded symplectic structure of degree $1$. Then
$(\sM,\omega)$ is symplectomorphic to $T^{*}[1]M$, equipped with the standard symplectic
form. Moreover, one can choose $M$ to be an ordinary manifold.
\end{theorem}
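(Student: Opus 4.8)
The plan is to exploit the very restrictive degree bookkeeping that a symplectic form of degree $1$ imposes, so that no hard analysis (such as a Moser argument) is needed. First I would pin down the coordinate structure. Working within the class of $N$-manifolds of the surrounding discussion, all homogeneous coordinates have non-negative degree. Non-degeneracy means that $\omega\colon T\sM\to T^{*}[1]\sM$ is an isomorphism of graded vector bundles, and reading its constant part off in homogeneous coordinates shows that it pairs the degree-$d$ and degree-$(1-d)$ coordinate directions non-degenerately. Since $1-d\ge 0$ forces $d\in\{0,1\}$, the only coordinates are those of degree $0$, which assemble into an ordinary manifold $M$ (the body of $\sM$), and those of degree $1$. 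Invoking the fact quoted above that every graded manifold comes from a graded vector bundle, $\sM\cong E[1]$ for an ordinary vector bundle $E\to M$, with base coordinates $(x^i)$ of degree $0$ and fiber coordinates $(p_i)$ of degree $1$ dual to a local frame of $E$.

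Next I would produce a canonical primitive. Since $k=1\neq 0$, Lemma~\ref{lemma:Dmitry1} gives $\omega=d\alpha$ with $\alpha:=\iota_E\omega$ of degree $1$. Here the Euler field is $E=\sum_i p_i\,\partial/\partial p_i$, and a short degree count shows that any degree-$1$ one-form of the shape $\iota_E\omega$ must be \emph{fiber-linear}, namely
\begin{align*}
\alpha=\sum_{i,j}a_{ij}(x)\,p_i\,dx^j
\end{align*}
for some matrix of functions $a_{ij}$ on $M$; no $dp$-term can occur, since its coefficient would have to be a function of degree $-1$. The point is that the non-degeneracy of $\omega=d\alpha$ is controlled by the block $\sum_{i,j}a_{ij}\,dp_i\,dx^j$ of $d\alpha$, so $\omega$ is non-degenerate precisely when the bundle map $a\colon E\to T^{*}M$ with matrix $(a_{ij})$ is an isomorphism.

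Finally I would transport $\omega$ to the standard form. The isomorphism $a\colon E\to T^{*}M$ induces an isomorphism of graded manifolds $\Phi\colon E[1]\to T^{*}[1]M$, determined on coordinates by $\Phi^{*}(x^j)=x^j$ and $\Phi^{*}(q_j)=\sum_i a_{ij}(x)\,p_i$, where $(q_j)$ denote the canonical degree-$1$ fiber coordinates of $T^{*}[1]M$. By construction $\Phi^{*}\alpha_{\mathrm{can}}=\sum_j(\sum_i a_{ij}p_i)\,dx^j=\alpha$, where $\alpha_{\mathrm{can}}=\sum_j q_j\,dx^j$ is the tautological one-form of $T^{*}[1]M$; hence $\Phi^{*}\omega_{\mathrm{can}}=\Phi^{*}d\alpha_{\mathrm{can}}=d\alpha=\omega$. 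Thus $\Phi$ is the desired symplectomorphism onto $T^{*}[1]M$ with its standard symplectic form, and $M$ is an ordinary manifold by the first step.

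I expect the main obstacle to be global well-definedness rather than the local normal form: the identification $\sM\cong E[1]$ and the fact that $a=\iota_E\omega$ is \emph{intrinsic} must be checked, so that the locally defined change of fiber coordinates $\tilde p_j=\sum_i a_{ij}p_i$ patches to a globally defined symplectomorphism. Getting the contraction signs right in the formula for $\alpha$, and verifying carefully that invertibility of $a$ is genuinely equivalent to non-degeneracy of $\omega$ (and is not affected by the $p$-dependent $dx\,dx$ block of $d\alpha$), are the remaining routine points that need attention.
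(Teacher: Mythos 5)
The paper itself contains no proof of Theorem~\ref{theorem:Schwarz}: the result is quoted from Schwarz \cite{Sw} (the $\mathbb{Z}$-graded version, which is what is actually being used here, is worked out by Roytenberg \cite{R1}), and the only in-paper material is the local degree bookkeeping in the paragraph preceding the statement. So the comparison is with the cited literature, and there your argument is precisely the standard one: use non-degeneracy of a degree-$1$ form on an $N$-manifold to exclude coordinates of degree $\geq 2$, identify $\sM$ with $E[1]$ for an ordinary vector bundle $E\to M$, take the canonical primitive $\alpha=\iota_E\omega$ supplied by Lemma~\ref{lemma:Dmitry1} with $k=1$, observe that $\alpha$ is fiber-linear with no $dp$-component, and read off from it a bundle isomorphism $a\colon E\to T^{*}M$ pulling back the tautological one-form. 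Your proof is correct, and the hypothesis you import from the surrounding discussion (all coordinates of non-negative degree) is genuinely needed: without it the statement fails, e.g.\ for $T^{*}[1](\mathbb{R}[2])$, which has coordinates in degrees $2$ and $-1$.

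Two points deserve attention. First, a degree-accounting slip: with the paper's convention ($\deg$ of a form $=$ total degree minus form degree, so $\alpha$ has total degree $2$), the coefficient of $dp_i$ in a general one-form of degree $1$ has degree $0$, not $-1$, so the $dp$-term is \emph{not} excluded by counting degrees in $\alpha$ alone. What kills it is that the $dp_i\,dp_j$-block of $\omega$ itself has coefficients of degree $-1$, hence vanishes on an $N$-manifold, and the $dp$-part of $\iota_E\omega$ can only come from that block; this is why your restriction to one-forms ``of the shape $\iota_E\omega$'' is essential, and the repair is one line. Second, the global patching issue you single out as the main obstacle is in fact automatic: the graded Euler vector field is defined independently of coordinates, hence so is $\alpha=\iota_E\omega$, and therefore the map $a$ (equivalently, the assignment sending a degree-$1$ function to the corresponding one-form coefficient of $\alpha$) is a frame-independent bundle map $E\to T^{*}M$. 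Moreover, for manifolds with coordinates only in degrees $0$ and $1$ no Batchelor-type splitting is required: the degree-$1$ component of $\smooth(\sM)$ is a locally free module $\Gamma(E^{*})$ over $\smooth(M)$ which generates the whole algebra, so $\sM\cong E[1]$ canonically.
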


The Theorem provides us with an isomorphism 
\begin{align*}
\smooth(\sM) \cong \smooth(T^{*}[1]M) = \Gamma(\wedge TM),
\end{align*}
which maps the bracket $\{-,-\}$ induced by $\omega$ to the Schouten-Nijenhuis bracket on $\Gamma(\wedge TM)$.
A smooth function $S$ of degree $2$ is mapped to a bivector field $\pi$ and if $S$ satisfies the classical master equation, $\pi$ will be Poisson.

Hence, there is a one-to-one correspondence
\begin{align*}
\xymatrix{
\textrm{isomorphism classes of dg symplectic $N$-manifolds of degree $1$}\ar@{<->}[d]^{1}_{1}\\
\textrm{isomorphism classes of Poisson manifolds}.
}
\end{align*}

\item $k=2$. It was noticed by P. \v{S}evera (see letter nr. $7$ of \cite{SeLetters}) that there is a one-to-one correspondence
\begin{align*}
\xymatrix{
\textrm{isomorphism classes of dg symplectic $N$-manifolds of degree $2$}\ar@{<->}[d]^{1}_{1}\\
\textrm{isomorphism classes of Courant algebroids}.
}
\end{align*}
We will not spell out the details of this correspondence -- the interested reader can find them in \cite{SeLetters} or \cite{R1}.

Observe that the degree of the graded symplectic form $\omega$ allows for coordinates
in degree $0$, $1$ and $2$. We denote them by $(x^{i})_{i=1}^{n}$, $(\xi^{\alpha})_{\alpha=1}^{A}$
and $(p_i)_{i=1}^{n}$, respectively. The graded symplectic form can be written as
\begin{align*}
\omega = \sum_{i=1}^{n}dp_i dx^{i} + \frac{1}{2}\sum_{\alpha,\beta=1}^{n}d\left(g_{\alpha\beta}(x)\xi^{\alpha}\right)d\xi^{\beta},
\end{align*}
where $(g_{\alpha\beta})$ is a symmetric non-degenerate form.

Globally, the graded symplectic form $\omega$ corresponds to $T^{*}[2]M$ and an additional
vector bundle $E$ over $M$, equipped with a non-degenerate fibre pairing $g$.

A Hamiltonian function $S$ for a cohomological vector field on such a graded manifold is locally
of the form
\begin{align*}
S= \sum_{i,\alpha}\rho^{i}_{\alpha}(x)p_i\xi^{\alpha} + \frac{1}{6}\sum_{\alpha,\beta,\gamma}f_{\alpha\beta\gamma}(x)\xi^{\alpha}\xi^{\beta}\xi^{\gamma}.
\end{align*}
The first term corresponds to a bundle map $\rho: E \to TM$, while the second one gives a bracket
$[-,-]$ on $\Gamma(E)$.

The classical master equation $\{S,S\}=0$ is equivalent to the statement that $(\rho,[-,-])$ equips
$(E,g)$ with the structure of a Courant algebroid.

\item Following J. Grabowski (\cite{G}), it is possible to include {\em generalized complex structures} on Courant algebroids in the graded
picture as follows: Assume that $S$ is a Hamiltonian function for a symplectic cohomological vector field
on a graded symplectic manifold of degree $2$. We want to construct a two-parameter family of
solutions of the classical master equation, i.e. we look for a smooth function $T$ of degree $3$ such that
\begin{align*}
\{\alpha S + \beta T, \alpha S + \beta T\}=0
\end{align*}
is satisfied for all constants $\alpha$ and $\beta$.

An important class of solutions arises when one finds a smooth function $J$ of degree $2$ such that $T=\{S,J\}$ satisfies
$\{T,T\}=0$. Under these circumstances, $T$ will solve the above two-parameter version of the classical master equation.
One way to assure $\{T,T\}=0$ is to require
\begin{align*}
\{\{S,J\},J\}=\lambda S
\end{align*}
to hold for some constant $\lambda$.
Up to rescaling, $\lambda$ is $-1$, $0$ or $1$. For $\lambda = -1$, such a smooth function $J$
yields a generalized complex structure on the Courant algebroid corresponding to $S$, if $J$ does not depend on the dual coordinates
$(p_i)_{i=1}^{n}$.
\end{enumerate}
\end{enumerate}

\subsection{Graded symplectic reduction}\label{subsection:reduction}

We recall some facts about reduction of presymplectic submanifolds. This can be extended to
graded symplectic manifolds and provides a unified approach to the reduction
of Poisson structures, Courant algebroids and generalized complex structures.
This Subsection relies on \cite{BCMZ1}, \cite{BCMZ2}, \cite{CZ1} and \cite{CZ2}, respectively.

\begin{definition}
Let $(M,\omega)$ be a symplectic manifold. A submanifold $i:S \hookrightarrow M$ is {\em presymplectic}
if the two-form $i^*\omega$ has constant rank.

In this case the kernel of $i^*\omega$ forms an integrable distribution $\mathcal{D}$ of $S$, called
the characterisitic distribution, and we denote its space of leaves
by $\underline{S}$.
\end{definition}

\begin{example}
A special class of presymplectic manifolds are {\em coisotropic} submanifolds.
A submanifold $S$ is coisotropic if for every point $x\in S$, the symplectic orthogonal of $T_xS$ is contained in $T_xS$. Equivalently, one can require that $S$ is given locally by the zero-set of constraints in
involution, i.e. there is a submanifold chart of $S$ with coordinates $(x^{i},y^{a})$ such that
\begin{itemize}
\item $S$ is given locally by $\{y^{a}\equiv 0\}$ and
\item the Poisson bracket $\{y^{a},y^{b}\}$ of any two constraints lies in the ideal of the algebra of smooth functions generated by the transverse coordinates $(y^{a})$.
\end{itemize}
\end{example}

\begin{lemma}
Let $i:S\hookrightarrow M$ be a presymplectic submanifold of $(M,\omega)$.
If $\underline{S}$ carries a smooth structure such that the canonical projection
\begin{align*}
\pi: S \to \underline{S}
\end{align*}
is a submersion, there is a unique symplectic structure $\underline{\omega}$
on $\underline{S}$ which statisfies
\begin{align*}
\pi^{*}\underline{\omega}=i^{*}\omega.
\end{align*}
\end{lemma}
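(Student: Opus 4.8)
The plan is to build $\underline{\omega}$ by hand, obtain uniqueness for free from the submersion property, and then verify smoothness, closedness and non-degeneracy. Since $\pi:S\to\underline{S}$ is a surjective submersion, its pullback $\pi^*$ is injective on differential forms: if $\pi^*\alpha=0$ then $\alpha$ vanishes on every tangent vector, because each such vector is of the form $\pi_*v$. Thus at most one two-form can satisfy $\pi^*\underline{\omega}=i^*\omega$, which already settles uniqueness and reduces the problem to constructing such a form and checking that it is symplectic.

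To define $\underline{\omega}$ at $\bar x=\pi(x)$, I would take $\bar v,\bar w\in T_{\bar x}\underline{S}$, lift them to $v,w\in T_xS$ (possible because $\pi_{*,x}$ is surjective), and set $\underline{\omega}_{\bar x}(\bar v,\bar w):=(i^*\omega)_x(v,w)$. Independence of the lifts is immediate: the lift is ambiguous only up to $\ker\pi_{*,x}=\mathcal{D}_x=\ker(i^*\omega)_x$, and feeding a vector from this kernel into $i^*\omega$ produces zero. The real content is independence of the chosen point $x$ within a leaf, i.e.\ the assertion that $i^*\omega$ descends along the characteristic foliation.

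This descent is the main obstacle, and I would handle it by showing that $i^*\omega$ is \emph{basic}. Two conditions are needed for every vector field $X$ tangent to $\mathcal{D}$. The contraction $\iota_X(i^*\omega)=0$ holds by the very definition of $\mathcal{D}$ as the kernel of $i^*\omega$. Invariance follows from closedness -- namely $d(i^*\omega)=i^*(d\omega)=0$ -- combined with Cartan's magic formula, giving
\begin{align*}
L_X(i^*\omega)=\iota_X\,d(i^*\omega)+d\,\iota_X(i^*\omega)=0.
\end{align*}
These two conditions say precisely that $i^*\omega$ is horizontal and invariant, so by the standard theory of surjective submersions it is the pullback of a unique smooth form on $\underline{S}$; that form coincides with the pointwise $\underline{\omega}$ defined above, and its smoothness is exactly what this basic-form descent provides.

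It remains to see that $\underline{\omega}$ is symplectic. Closedness is formal: $\pi^*(d\underline{\omega})=d(\pi^*\underline{\omega})=d(i^*\omega)=0$, and injectivity of $\pi^*$ forces $d\underline{\omega}=0$. For non-degeneracy, suppose $\underline{\omega}_{\bar x}(\bar v,-)=0$ and lift $\bar v$ to $v\in T_xS$; then $(i^*\omega)_x(v,w)=0$ for all $w$, since $\pi_*w$ ranges over all of $T_{\bar x}\underline{S}$, so $v\in\ker(i^*\omega)_x=\mathcal{D}_x$ and therefore $\bar v=\pi_*v=0$. Hence $\underline{\omega}$ is closed and non-degenerate, i.e.\ symplectic. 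The only genuinely delicate step is the descent in the third paragraph; uniqueness, closedness and non-degeneracy are all either formal consequences of the injectivity of $\pi^*$ or follow directly from the identification $\ker(i^*\omega)=\mathcal{D}$.
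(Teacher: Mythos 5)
Your proposal is correct. Note that the paper itself states this lemma without proof, so there is nothing to compare against line by line; your argument supplies the standard one. The structure is exactly right: uniqueness from injectivity of $\pi^*$ (a consequence of $\pi$ being a surjective submersion), existence via the descent of $i^*\omega$ as a basic form (horizontal because $\mathcal{D}=\ker i^*\omega$, invariant by Cartan's formula and $d(i^*\omega)=i^*d\omega=0$), and then closedness and non-degeneracy of $\underline{\omega}$ by the formal arguments you give. One small point worth making explicit: the descent of a horizontal, invariant form along a surjective submersion requires the fibres of $\pi$ to be connected; here this is automatic, since the fibres are the leaves of the characteristic foliation and leaves are connected by definition, but it is the reason the "independence of the point within a leaf" step actually closes.
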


\begin{remark}
From now on, we will always assume that $\underline{S}$ can be equipped with a smooth structure
such that the natural projection from $S$ is a submersion.
Assuming this, the symplectic manifold  $(\underline{S},\underline{\omega})$ is called the {\em reduction} of  $S$.
\end{remark}

\begin{definition}
Let $i:S\hookrightarrow M$ be a presymplectic submanifold of $(M,\omega)$.

A function $f \in \smooth(M)$ is called
\begin{itemize}
\item $S$-reducible, if $i^{*}f$ is invariant under the characteristic
distribution $\mathcal{D}$.
\item strongly $S$-reducible, if the Hamiltonian vector field of $f$ is tangent to $S$.
\end{itemize}
\end{definition}

\begin{lemma}
\hspace{0cm}
\begin{itemize}
\item Let $f$ be an $S$-reducible function. There is a unique smooth function $\underline{f}$ on $\underline{S}$
satisfying
\begin{align*}
\pi^{*}\underline{f}=i^{*}f.
\end{align*}

\item Let $f$ be strongly $S$-reducible. Then the following assertions hold:
\begin{enumerate}
\item $f$ is $S$-reducible.
\item The restriction $X_f|_{S}$ of the Hamiltonian vector field of $f$ to $S$ is projectable, i.e.
\begin{align*}
[X_f|_{S},\mathcal{D}] \subset \mathcal{D}.
\end{align*}
This implies that there is a unique vector field $\underline{X}_f$ on $\underline{S}$ which is $\pi$-related to $X_f|_{S}$, i.e. $T_{x}\pi(X_f)_x = (\underline{X}_f)_{\pi(x)}$ holds for all $x \in S$.
\item The vector field $\underline{X}_f$ is the Hamiltonian vector field of $\underline{f}$.
\end{enumerate}
\end{itemize}
\end{lemma}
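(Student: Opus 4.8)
The plan is to treat the two bullet points in turn, handling the three numbered assertions of the second bullet in the order stated, since each relies on the previous one. Throughout I would work on $S$ with the pulled-back two-form $i^{*}\omega$, whose kernel is by definition the characteristic distribution $\mathcal{D}$, and repeatedly invoke the ordinary (ungraded) Cartan-calculus identities: $i^{*}\circ d = d\circ i^{*}$, the relation $i^{*}(\iota_{X}\alpha)=\iota_{X|_{S}}(i^{*}\alpha)$ valid whenever $X$ is tangent to $S$, and the commutator identity $[L_{X},\iota_{Y}]=\iota_{[X,Y]}$.

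For the first bullet, $S$-reducibility says precisely that $i^{*}f$ is annihilated by every vector field taking values in $\mathcal{D}$, hence is constant along the (connected) leaves of the characteristic foliation. Since $\pi:S\to\underline{S}$ is, by the standing assumption, a surjective submersion whose fibres are exactly these leaves, the universal property of the leaf space produces a unique function $\underline{f}$ with $\pi^{*}\underline{f}=i^{*}f$; its smoothness follows from that of $i^{*}f$ together with the existence of local sections of $\pi$.

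For assertion (1), I would compute, for $v\in\mathcal{D}$, that $d(i^{*}f)(v)=i^{*}(df)(v)=i^{*}(\iota_{X_{f}}\omega)(v)=(i^{*}\omega)(X_{f}|_{S},v)$, where tangency of $X_{f}$ to $S$ justifies restricting the contraction; since $v\in\ker(i^{*}\omega)$ this vanishes, so $i^{*}f$ is $\mathcal{D}$-invariant and $f$ is $S$-reducible. For assertion (2), the crucial input is that $X_{f}$ is symplectic: $L_{X_{f}}\omega=d\iota_{X_{f}}\omega=d\,df=0$ since $\omega$ is closed, and tangency gives $L_{X_{f}|_{S}}(i^{*}\omega)=i^{*}(L_{X_{f}}\omega)=0$. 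Then for $Y\in\Gamma(\mathcal{D})$ I apply the commutator identity: $\iota_{[X_{f}|_{S},Y]}(i^{*}\omega)=L_{X_{f}|_{S}}\iota_{Y}(i^{*}\omega)-\iota_{Y}L_{X_{f}|_{S}}(i^{*}\omega)=0$, both terms vanishing because $\iota_{Y}(i^{*}\omega)=0$ and $L_{X_{f}|_{S}}(i^{*}\omega)=0$. Hence $[X_{f}|_{S},Y]\in\Gamma(\mathcal{D})$, which is the projectability condition; as this is the infinitesimal statement that the flow of $X_{f}|_{S}$ carries leaves to leaves, the field descends to a unique $\pi$-related vector field $\underline{X}_{f}$ on $\underline{S}$.

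Finally, for assertion (3), I would check the Hamilton equation after pulling back along $\pi$, using that $\pi^{*}$ is injective on forms because $\pi$ is a surjective submersion. On one side, $\pi^{*}(d\underline{f})=d(i^{*}f)=i^{*}(\iota_{X_{f}}\omega)=\iota_{X_{f}|_{S}}(i^{*}\omega)=\iota_{X_{f}|_{S}}(\pi^{*}\underline{\omega})$; on the other, $\pi$-relatedness of $X_{f}|_{S}$ and $\underline{X}_{f}$ gives $\iota_{X_{f}|_{S}}(\pi^{*}\underline{\omega})=\pi^{*}(\iota_{\underline{X}_{f}}\underline{\omega})$. Comparing and cancelling the injective $\pi^{*}$ yields $d\underline{f}=\iota_{\underline{X}_{f}}\underline{\omega}$, so $\underline{X}_{f}$ is the Hamiltonian vector field of $\underline{f}$. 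I expect the genuine difficulty to sit in assertion (2): the vanishing computations are routine, but converting the bracket condition $[X_{f}|_{S},\mathcal{D}]\subset\mathcal{D}$ into an honestly defined vector field on $\underline{S}$ is precisely where one must lean on the standing smoothness hypothesis for $\underline{S}$ and on $\pi$ being a submersion.
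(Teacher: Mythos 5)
Your proposal is correct and follows essentially the same route as the paper: descent of $i^{*}f$ through the surjective submersion $\pi$ for the first bullet, the kernel-of-$i^{*}\omega$ computation for assertion (1), the Cartan identity $[L_{X},\iota_{Y}]=\iota_{[X,Y]}$ for projectability in assertion (2), and comparison of pullbacks along $\pi$ for assertion (3). The only cosmetic difference is in (2), where you place the Lie derivative on $X_{f}|_{S}$ (using $L_{X_{f}|_{S}}i^{*}\omega=0$) while the paper places it on the $\mathcal{D}$-valued field $X$ (using $L_{X}i^{*}df=0$); both are mirror images of the same commutator argument.
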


\begin{proof}
Let $f$ be an $S$-reducible function. To establish the existence and uniqueness of $\underline{f}$, one observes that $i^{*}f$ is constant along the fibres of
\begin{align*}
\pi: S \to \underline{S}
\end{align*}
and hence descends to a function $\underline{f}$ on $\underline{S}$. Smoothness of $\underline{f}$ follows from the assumption that $\pi$ is a surjective submersion, which implies that
a function on $\underline{S}$ is smooth if and only if its pull back by $\pi$ is.

From now on, let $f$ be a strongly $S$-reducible function. We first show that $f$ is also $S$-reducible, i.e. that it is invariant under
the characteristic distribution $\mathcal{D}$ of $S$. The identities
\begin{align*}
L_X i^{*}f = \iota_X (d i^{*}f) \qquad \textrm{and} \qquad (i^{*}df)(X) = i^{*}\omega(X_f|_{S},X),
\end{align*}
where $X_f$ is the Hamiltonian vector field of $f$, imply
\begin{align*}
L_X i^{*}f = 0
\end{align*}
for all $X \in \Gamma(\mathcal{D})$, since $\mathcal{D}$ is the kernel of $i^{*}\omega$.

Next, we claim that the commutator of $X_f|_S$ and an arbitrary vector field $X$ on $S$, with values in $\mathcal{D}$, is a vector field
with values in $\mathcal{D}$ too. To this end, we compute
\begin{eqnarray*}
\iota_{[X,X_f|_{S}]}i^* \omega = ([L_X, \iota_{X_f|_S}]) i^{*}\omega = L_X (\iota_{X_f|_S}i^{*}\omega) = L_X i^{*} df = 0.
\end{eqnarray*}
This guarantees that
\begin{align*}
X_f|_S(\pi^{*}g)
\end{align*}
is $S$-reducible and one can define a linear endomorphism of $\smooth(\underline{S})$ by
\begin{align*}
g \mapsto \underline{X_f|_S(\pi^{*}g)}.
\end{align*}
It is easy to check that this is a derivation of $\smooth(\underline{S})$ and hence corresponds to a vector field $\underline{X}_f$ on $\underline{S}$.
By construction, $X_f|_S$ and $\underline{X}_f$ are $\pi$-related; uniqueness of $\underline{X}_f$ follows from $\pi$ being a surjective submersion.

The final claim follows from
\begin{align*}
\pi^{*}(d\underline{f}) = i^{*} df = \iota_{X_f|_S}i^*\omega = \pi^{*}( \iota_{\underline{X}_f}\underline{\omega}).
\end{align*}
\end{proof}

\begin{lemma}
Let $S$ be a coisotropic submanifold. Then the notion of $S$-reducibility is equivalent to
the notion of strong $S$-reducibility.
\end{lemma}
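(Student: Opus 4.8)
The plan is to prove the non-trivial direction, namely that for a coisotropic $S$ every $S$-reducible function $f$ is \emph{strongly} $S$-reducible. The reverse implication is already contained in the preceding Lemma, whose second bullet shows that strong $S$-reducibility entails $S$-reducibility for \emph{any} presymplectic submanifold, with no coisotropy hypothesis. Hence the claimed equivalence reduces to this single remaining implication, and the whole matter can be settled pointwise at each $x\in S$ by linear algebra in the symplectic vector space $(T_xM,\omega_x)$.

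First I would identify the characteristic distribution explicitly. For a coisotropic submanifold the symplectic orthogonal $T_xS^{\omega}$ is contained in $T_xS$, so the kernel of $i^{*}\omega$ at $x$, which in general equals $T_xS\cap T_xS^{\omega}$, here collapses to $T_xS^{\omega}$ itself; thus $\mathcal{D}_x=T_xS^{\omega}$. Next I would unwind the hypothesis: $f$ being $S$-reducible means that $i^{*}f$ is annihilated by every section of $\mathcal{D}$, i.e. $df_x(v)=0$ for all $v\in\mathcal{D}_x=T_xS^{\omega}$, where I use that for $v$ tangent to $S$ one has $df_x(v)=d(i^{*}f)_x(v)$.

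The key step is then to transport this into a statement about the Hamiltonian vector field via the defining relation $\iota_{X_f}\omega=df$. The condition $df_x(v)=0$ for all $v\in T_xS^{\omega}$ reads $\omega_x\bigl((X_f)_x,v\bigr)=0$ for all such $v$, which says precisely that $(X_f)_x$ lies in the symplectic orthogonal of $T_xS^{\omega}$. Invoking the involutivity of the symplectic orthogonal in a symplectic vector space, namely $(W^{\omega})^{\omega}=W$ for every subspace $W$ (a dimension count together with the evident inclusion $W\subseteq(W^{\omega})^{\omega}$ forced by antisymmetry of $\omega$), I obtain $(X_f)_x\in(T_xS^{\omega})^{\omega}=T_xS$. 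Since $x\in S$ was arbitrary, $X_f$ is tangent to $S$, which is exactly strong $S$-reducibility.

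I do not anticipate a genuine obstacle here: the substance is the pointwise linear-algebra identity $(W^{\omega})^{\omega}=W$ combined with the coisotropy-driven identification $\mathcal{D}=TS^{\omega}$. The only points demanding care are keeping track of the distinction between $df$ and $d(i^{*}f)$ (which agree on vectors tangent to $S$) and noting that the coisotropy hypothesis is exactly what licenses replacing $T_xS\cap T_xS^{\omega}$ by $T_xS^{\omega}$; this is the sole place where the assumption enters, and it is precisely the step that fails for a general presymplectic submanifold.
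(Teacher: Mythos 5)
Your proof is correct. Note that the paper states this lemma without proof, so there is no argument of the authors to compare against; your pointwise linear-algebra route is the standard and complete way to fill it in: coisotropy collapses $\mathcal{D}_x = T_xS \cap T_xS^{\omega}$ to $T_xS^{\omega}$, $S$-reducibility translates (after extending any $v \in \mathcal{D}_x$ to a local section of the constant-rank distribution $\mathcal{D}$) into $\omega_x\bigl((X_f)_x,v\bigr)=0$ for all $v \in T_xS^{\omega}$, and the double-orthogonal identity $(W^{\omega})^{\omega}=W$ then forces $(X_f)_x \in T_xS$. Your handling of the converse is also right: it is exactly the first assertion of the preceding lemma, which holds for arbitrary presymplectic submanifolds with no coisotropy needed, so the only content here is the direction you prove.
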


\begin{remark}
\hspace{0cm}
\begin{itemize}
\item The above definitions and statements can be extended to graded symplectic manifolds.
\item In the last Subsection, we saw that any Poisson manifold gives rise to a dg symplectic
manifold $(\mathcal{M},\omega)$ of degree $1$ with Hamiltonian function $\Theta$ of degree $2$.

Let $\mathcal{S}$ be a graded presymplectic submanifold of $\mathcal{M}$ and suppose that
$\Theta$ is strongly reducible. Graded reduction yields a new graded symplectic manifold $(\underline{\mathcal{S}},\underline{\omega})$ of degree $1$. Moreover, $\Theta$
induces a smooth function $\underline{\Theta}$ on $\underline{S}$ which satisfies
\begin{align*}
\{\underline{\Theta},\underline{\Theta}\} = X_{\underline{\Theta}}(\underline{\Theta}) = 
\underline{X}_{\Theta}(\underline{\Theta})= \underline{X_{\Theta}(\Theta)}=0.
\end{align*}
Hence, one obtains a new dg symplectic manifold of degree $1$, which corresponds to a new Poisson manifold.
\item Similarly, one can reduce dg manifolds of degree $2$, which correspond to
Courant algebroids. Furthermore, it is possible to include generalized complex
structures if one assumes that the corresponding function $J$ of degree $2$ is strongly reducible as well.
\end{itemize}
\end{remark}

\begin{remark}
It is also possible to extend Marsden-Weinstein reduction to graded symplectic manifolds by allowing Hamiltonian graded group actions, see \cite{CZ1}.
\end{remark}

\begin{example}
Let us consider the case of the dg symplectic manifolds arising from a Poisson manifold $(M,\pi)$ in more detail.
Recall that the corresponding graded symplectic manifold is $\mathcal{M}:=(T^{*}[1]M)$, equipped with the standard symplectic form.

Every coisotropic submanifold $\mathcal{S}$ corresponds to a submanifold $C$ of $M$ and an integrable distrubtion $B$ on $C$.
Reducibility of the Hamiltonian function $\Theta$, which corresponds to the Poisson bivector field $\pi$, is equivalent
to:
\begin{enumerate}
\item The image of the restriction of the bundle map $\pi^{\sharp}:T^{*}M \to TM$ to the conormal bundle $N^*C$ of $C$ is contained in $B$.
\item For any two function $f$ and $g$ on $M$ whose restriction to $C$ is invariant with respect to the distribution $B$, the restriction
of the Poisson bracket of $f$ and $g$ to $C$ is invariant too. 
\end{enumerate} 
Suppose that these conditions hold and that the leaf space $\underline{M}$ of $B$ admits a smooth structure such that the natural projection $p: C \to \underline{M}$ is
a surjective submersion.
Setting
\begin{align*}
\{\underline{f},\underline{g}\}:=\{\widetilde{p^{*}\underline{f}},\widetilde{p^{*}\underline{g}}\},
\end{align*}
defines a Poisson structure on $\underline{M}$. Here, $\widetilde{p^{*}\underline{f}}$ denotes a smooth extension of $p^{*}\underline{f}$ to $M$, 

What one recovers here is a particular case of Marsden-Ratiu reduction, see \cite{MR}.
More interesting examples can be obtained by considering presymplectic submanifolds, see \cite{CZ1,CZ2}.
\end{example}

\section{An introduction to the AKSZ-formalism}\label{section:AKSZ}

The AKSZ-formalism goes back to the article \cite{AKSZ} of M. Alexandrov, M. Kontsevich, A. Schwarz and O. Zaboronsky.
 It is a procedure that allows one to construct solutions to the classical master equation on mapping spaces between graded manifolds that are equipped
with additional structures. Particularly interesting examples arise from mapping spaces between shifted tangent bundles and
dg symplectic manifolds. This allows one to associate topological field theories to dg symplectic manifolds, encompassing examples such
as Chern-Simons theory (on trivial principal bundles) and the Poisson-sigma model. 

Let us describe the AKSZ-formalism in a nutshell. The `input data' are:
\begin{itemize}
\item The {\em source} $\mathcal{N}$: a dg manifold, equipped with a measure which is invariant under the cohomological vector field. 
\item The {\em target} $\mathcal{M}$: a dg symplectic manifold, whose cohomological vector field is Hamiltonian.
\end{itemize}

Out of this, the AKSZ-formalism tailors:
\begin{itemize}
\item a graded symplectic structure on the space of maps $\smooth(\mathcal{N},\mathcal{M})$ and
\item a symplectic cohomological vector field on $\smooth(\mathcal{N},\mathcal{M})$.
\end{itemize}

Under mild conditions, symplectic cohomological vector fields are Hamiltonian and one might find
a Hamiltonian function which commutes with itself under the Poisson bracket. Hence, in many cases the AKSZ-formalism produces
a solution of the classical master equation on $\smooth(\mathcal{N},\mathcal{M})$.

In the following, we $1)$ describe the relevant mapping spaces between graded manifolds and $2)$ give an outline of the AKSZ-formalism. 

\subsection{Maps of graded manifolds}

Given two graded manifold $X$ and $Y$, the set of morphisms $\Mor(X,Y)$ was defined to be the set of morphisms of $\mathbb{Z}$-graded algebras
from $\smooth(Y)$ to $\smooth(X)$.

The category of graded manifolds admits a monoidal structure $\times$, which is the coproduct of locally ringed spaces. From a categorical perspective,
one might wonder whether the set $\Mor(X,Y)$ can be equipped with the structure of a graded manifold in a natural way, such that it is the adjoint to the monoidal structure, i.e.
such that there is a natural isomorphism
\begin{align*}
\Mor(Z\times X, Y) \cong \Mor(Z,\Mor(X,Y))
\end{align*}
for arbitrary graded manifolds $X$, $Y$ and $Z$.

This turns out not to be possible. However, there actually is a (usually infinite dimensional) graded manifold $\Map(X,Y)$ canonically associated to a pair $(X,Y)$, which satisfies
\begin{align*}
\Mor(Z\times X, Y) \cong \Mor(Z,\Map(X,Y)).
\end{align*}
Moreover, there is a natural inclusion of $\Mor(X,Y)$ into $\Map(X,Y)$ as the submanifold of degree $0$.

\begin{remark}
\hspace{0cm}
\begin{enumerate}
\item Usually, $\Map(X,Y)$ is an infinite-dimensional object. However, there are noteworthy finite dimensional examples such as $\Map(\mathbb{R}[1],X)=T[1]X$.
\item The difference between $\Mor$ and $\Map$ can be illustrated in the following example: While $\Mor(X,\mathbb{R})$ is equal to the elements of $\smooth(X)$ in degree $0$,
the mapping space $\Map(X,\mathbb{R})$ is equal to the whole of $\smooth(X)$.
\item Similarly, the infinitesimal object associated to the group of invertible morphisms from $X$ to $X$ is the Lie algebra of vector fields of degree $0$,
whereas considering the group of invertible elements of the mapping space $\Map(X,X)$ yields the graded Lie algebra of all vector fields on $X$
\end{enumerate}
\end{remark}

\begin{remark}
Let us spell out $\Mor$ and $\Map$ in the local picture, i.e. for two graded vector spaces
$V$ and $W$. One has
\begin{align*}
\Mor(V,W) = \Mor(\smooth(W),\smooth(V))= (W \otimes \smooth(V))^{0}
\end{align*}
where $(W\otimes \smooth(V))$ is considered as a graded vector space and the superscript $0$ refers
to the elements in degree $0$.
In contrast to this,
\begin{align*}
\Map(V,W) = W\otimes \smooth(V)
\end{align*}
holds
\end{remark}

\subsection{Lifting geometric structures}

Geometric structures on the graded manifolds $X$ and $Y$ induce interesting
structures on the mapping space $\Map(X,Y)$. For instance, cohomogical vector fields on
$X$ and $Y$ can be lifted to commuting cohomological vector fields on $\Map(X,Y)$.
Another example is a graded symplectic structure on $Y$ and an invariant measure on $X$,
which allow one to construct a graded symplectic structure on $\Map(X,Y)$.
Let us elaborate on this in more detail:

\begin{enumerate}
\item The groups of invertible maps $\Diff(X)$ and $\Diff(Y)$ act on $\Map(X,Y)$ by composition and these
two actions commute. Differentiation yields commuting infinitesimal actions
\begin{align*}
\xymatrix{
\mathcal{X}(X)  \ar[rrr]^{L} &&& \mathcal{X}(\Map(X,Y)) &&& \mathcal{X}(Y) \ar[lll]_{R}.
}
\end{align*}
Now, suppose $Q_X$ and $Q_Y$ are cohomological vector fields on $X$ and $Y$.
We denote their images under $L$ -- respectively $R$ -- by $Q_X^L$ and $Q_Y^{R}$.
Their sum
\begin{align*}
Q:= Q_X^{L} + Q_Y^{R}
\end{align*}
is then a cohomological vector field on $\Map(X,Y)$.
\item Suppose $Y$ carries a graded symplectic form $\omega$. Any form $\alpha \in \Omega(Y)$
can be pulled back to a form on $\Map(X,Y)\times Y$ via the evaluation map
\begin{align*}
\textrm{ev}: \Map(X,Y) \times X \to Y.
\end{align*}

To produce a differential form on the mapping space $\Map(X,Y)$, some notion of
push forward along $X$ is required. To this end, the theory of Berezinian measures and Berezinian
integration is needed -- the interested reader might consult \cite{M}. Basically, this is an extension of the usual (Lebesgue-) integration theory
by adding the rule
\begin{align*}
\int \xi d\xi = 1
\end{align*}
for each odd coordinate $\xi$.
For instance, if one considers integration on a graded vector space $V$ concentrated in odd degrees, the integration map
\begin{align*}
\smooth(V) = \wedge V^{*} \to \mathbb{R}
\end{align*}
is just the projection to the top exterior product (which we identify with $\mathbb{R}$).
Another special case is provided by $X=T[1]\Sigma$, where $\Sigma$ is some compact, smooth, oriented manifold.
The graded manifold $X$ carries a canonical measure which maps a function $f$ on $X$ to
\begin{align*}
\int_{X} f := \int_{\Sigma}j(f),
\end{align*}
where $j$ denotes the isomorphism between $\smooth(X)$ and $\Omega(\Sigma)$.

Assuming that $X=T[1]\Sigma$ is equipped with its canonical measure, we obtain a map
\begin{eqnarray*}
\Omega(Y) &\to & \Omega(\Map(X,Y))\\
\alpha &\mapsto& \hat{\alpha}:=\int_{X}\textrm{ev}^{*}\alpha
\end{eqnarray*}
which is of degree $-\dim(\Sigma)$.

Furthermore, one can check that if $\omega$ is a graded symplectic form of degree $k$ on $Y$,
$\hat{\omega}$ is a graded symplectic form of degree $k-\dim(Y)$ on $\Map(X,Y)$.

\item We want to combine the structures obtained in $1.)$ and $2.)$. First, we assume that the cohomological vector field $Q_Y$
is Hamiltonian with Hamiltonian function $\Theta$. This implies that $Q_{Y}^{R}$ is also Hamiltonian
with respect to the graded symplectic structure $\hat{\omega}$ -- 
in fact, $\hat{\Theta}$ is a Hamiltonian function.

Concerning the source $X$, we concentrate on the case $X=T[1]\Sigma$, equipped with the
cohomological vector field $Q_X$ which corresponds to the de Rham differential.
The canonical measure on $X$ is invariant with respect to the cohomological vector field in the sense
that
\begin{align*}
\int_{X}Q_X(f) = 0
\end{align*}
holds for every smooth function $f$.
It follows from invariance that $Q_X^{L}$ is also Hamiltonian with respect to $\hat{\omega}$. We denote the Hamiltonian function
by $S_0$. 
\end{enumerate}

\begin{example}
Let $Y$ be a graded symplectic vector space of degree $1$, equipped with the standard exact symplectic form $d\alpha$. By Theorem \ref{theorem:Schwarz},
$Y$
is symplectomorphic to $T^{*}[1]V$, for some vector space $V$. Moreover, a symplectic cohomological vector field on $Y$ corresponds to a 
Poisson bivector field $\pi$ on $V$.
Suppose $(q^{i})$ are local coordinates on $V$ and denote the dual fibre coordinates on $T^{*}[1]V$ by $(p_i)$.
The set of coordinates $(q^{i},p_i)$ of $Y$, together with a set of coordinates $(x^{i}, dx^{i})$ of $X=T[1]\Sigma$, induces a set of coordinates $(X^i,\eta_i)$ on $\Map(X,Y)$.
In these coordinates, the Hamiltonian $S_0$ for the lift of the de Rham differential on $X$  to the mapping space reads
\begin{align*}
S_0 = \pm \int_{\Sigma}\eta_i dX^{i},
\end{align*}
while the Hamiltonian function for the lift of the cohomological vector field on $Y$ is given by
\begin{align*}
\hat{\Theta} = \frac{1}{2}\int_{\Sigma}\pi^{ij}(X)\eta_i\eta_j.
\end{align*} 
The sum of $S_0$ and $\hat{\Theta}$ is the BV action functional of the Poisson sigma model on $\Sigma$. This is a topological field theory associated to Poisson structures, which was discovered by N. Ikeda (\cite{I}) and P. Schaller and T. Strobl (\cite{SS}).

\end{example}

\begin{remark}
Assuming that the graded symplectic form $\omega$ on $Y$ is of degree $k$,
the Hamiltonian function $\Theta$ has degree $k+1$. If one assumes in addition that $\Sigma$ is of dimension $n$,
the graded symplectic form $\hat{\omega}$ on $\Map(X,Y)$ is of degree $k-n$ and the Hamiltonian function $\hat{\Theta}$ is of degree $k+1-n$.

The case $n=k+1$ corresponds to the BV-formalism -- introduced by Batalin and Vilkovisky (\cite{BV2}), while
$n=k$ corresponds to the BFV-formalism -- introduced by Batalin, Fradkin and Vilkovisky (\cite{BF,BV1}). These together with the cases $k<n$ should be related to extended
topological fields theories in the sense of Lurie (\cite{L}).
\end{remark}

\thebibliography{10}

\bibitem{AKSZ} M. Alexandrov, M. Kontsevich, A. Schwarz, O. Zaboronsky,
{\em The geometry of the Master equation and topological quantum field theory},
 Int. J. Modern Phys. A, 12(7),1405--1429, 1997

\bibitem{BF} I. A. Batalin and E. S. Fradkin, 
{\em A generalized canonical formalism and quantization of reducible gauge theories},
Phys. Lett. B 122, 157--164 (1983)

\bibitem{BV1} I. A. Batalin and G. A. Vilkovisky,
{\em Relativistic S-matrix of dynamical systems with boson and fermion constraints},
Phys. Lett. B 69 (1977), 309--312

\bibitem{BV2} I. A. Batalin, G. A. Vilkovisky,
{\em Gauge algebra and quantization}, Phys. Lett., 102B:27, 1981

\bibitem{BCMZ1} H. Bursztyn, A. S. Cattaneo, R. Mehta, and M. Zambon,
{\em Generalized reduction via graded
geometry}, in preparation.

\bibitem{BCMZ2} H. Bursztyn, A. S. Cattaneo, R. Metha, and M. Zambon. 
{\em The Frobenius theorem for graded manifolds
and applications in graded symplectic geometry}, in preparation.

\bibitem{CFio} C. Carmeli, L. Caston and  R. Fioresi,
{\em Mathematical Foundation of
Supersymmetry}, with an appendix with I. Dimitrov,
EMS Ser. Lect. Math., European Math. Soc., Zurich 2011

\bibitem{CF} A. S. Cattaneo, G. Felder,
{\em On the AKSZ formulation of the Poisson sigma model},
Lett. Math. Phys., 56, 163--179, 2001

\bibitem{CZ1} A. S. Cattaneo and M. Zambon,
{\em  A supergeometric approach to Poisson reduction},
\texttt{arxiv:1009.0948}

\bibitem{CZ2} A. S. Cattaneo and M. Zambon,
{\em Graded geometry and Poisson reduction}, 
American Institute of Physics Conference Proceedings 1093, 48--56

\bibitem{G} J. Grabowski, 
{\em Courant-Nijenhuis tensors and generalized geometries},
Monograf\'{i}as de la Real Academia de Ciencias de Zaragoza 29 (2006), 101--112

\bibitem{I} N. Ikeda,
{\em Two-dimensional gravity and nonlinear gauge theory},
Ann. Phys. 235, (1994) 435--464

\bibitem{L} J. Lurie, 
{\em On the classification of topological field theories}, 
Current Developments in Mathematics  (2009), 129--280

\bibitem{M} Yu. Manin,
{\em Gauge fields and complex geometry},
Springer-Verlag, Berlin, 1997

\bibitem{MR} J. E. Marsden and T. Ratiu,
{\em Reduction of Poisson manifolds}, Lett. Math. Phys. {\bf 11} (1986),
161--169.

\bibitem{R1} D. Roytenberg,
{\em On the structure of graded symplectic supermanifolds and Courant algebroids}, 
in: Quantization, Poisson Brackets and Beyond, Th.
Voronov (ed.), Contemp. Math, Vol. 315, Amer. Math. Soc., Providence, RI, 2002

\bibitem{R2} D. Roytenberg,
{\em AKSZ-BV formalism and Courant algebroid-induced topological field theories},
Lett. Math. Phys. 79, 143 (2007)

\bibitem{SS} P. Schaller, T. Strobl,
{\em Poisson structure induced (topological) field theories},
Modern Phys. Lett. A 9 (1994), no. 33, 3129--3136

\bibitem{Sw} A. Schwarz, 
{\em Geometry of Batalin-Vilkovisky quantization},
Commun. Math. Phys. 155 (1993), 249--260

\bibitem{SeLetters} P. \v{S}evera,
{\em Some letters to Alan Weinstein},
available at \texttt{http://sophia.dtp.fmph.uniba.sk/\~{}severa/letters/}, between 1998 and 2000

\bibitem{SeHomotopy} P. \v{S}evera,
{\em Some title containing the words ``homotopy'' and ``symplectic'', e.g. this one},
Travaux math\'ematiques XVI (2005), 121--137

\bibitem{Se} P. \v{S}evera, 
{\em On the Origin of the BV Operator on Odd Symplectic Supermanifolds},
Lett. Math. Phys., vol. {\bf 78} (2006), issue 1, 55--59

\bibitem{V} A. Vaintrob,
{\em Lie algebroids and homological vector fields},
Uspekhi Mat. Nauk, 52(2), 428--429, 1997

\bibitem{Var} V. S. Varadarajan,
{\em Supersymmetry for mathematicians: an introduction},
Courant Lecture Notes Series, New York, 2004

\end{document}